\newcommand*{\dd}{\textrm{d}}
\newcommand*{\be}{\begin{equation}}
\newcommand*{\ee}{\end{equation}}
\newcommand*{\bea}{\begin{eqnarray}}
\newcommand*{\eea}{\end{eqnarray}}
\newcommand*{\bd}{\begin{displaymath}}
\newcommand*{\ed}{\end{displaymath}}
\newcommand*{\mt}{\textrm}
\newcommand*{\mc}{\mathcal}
\newcommand*{\mf}{\mathfrak}
\newcommand*{\mb}{\mathbb}
\newcommand*{\tr}{\textrm{tr}}
\newcommand*{\ad}{\textrm{ad}}
\newcommand*{\id}{\textrm{id}}
\newcommand*{\1}{\openone}
\theoremstyle{plain}
\newtheorem{propo}{Proposition}
\newtheorem{lemma}{Lemma}
\theoremstyle{definition}
\newtheorem{post}{Postulate}
\begin{document}

\title{Local Lorentz Covariance in Finite-dimensional Local Quantum Physics}
\date{\today}
\author{Matti Raasakka}
\email{mattiraa@gmail.com}
\affiliation{Tarhurintie 8 B 35, 01350 Vantaa, Finland}

\begin{abstract}
We show that local Lorentz covariance arises canonically as the group of transformations between local thermal states in the framework of Local Quantum Physics, given the following three postulates: (i) Local observable algebras are finite-dimensional. (ii) Minimal local observable algebras are isomorphic to $\mb{M}_2(\mb{C})$, the observable algebra of a single qubit. (iii) The vacuum restricted to any minimal local observable algebra is a non-maximally mixed thermal state. The derivation reveals a new and surprising relation between spacetime structure and local quantum states. In particular, we show how local restrictions of the vacuum can determine the connection between different local inertial reference frames.
\end{abstract}

\maketitle

\section{Introduction}
The Lorentz group is a fundamental structure in modern physics. In special relativity, it arises as the group of transformations between the global descriptions of physics by different inertial observers \cite{Wald84}. Together with the group of spacetime translations, it also forms the isometry group of spacetime in special relativity, the Poincar\'e group. Consequently, the structure of special relativistic quantum field theory (QFT), and the Standard Model of particle physics in particular, is dictated to a large extent by the requirement of invariance of measurement results under the Poincar\'e group \cite{Haag96}. In general relativity, spacetime is no longer Poincar\'e symmetric, but a local form of Lorentz covariance survives the generalization to curved spacetime manifolds \cite{Wald84}. The equivalence principle requires that spacetime should look approximately flat in any sufficiently small region. Therefore, the tangent spaces, which describe the local spacetime structure, inherit the action of the Lorentz group as the group of transformations between different local inertial reference frames. The covariance of physical quantities under local Lorentz transformations imposes fundamental constraints on the local physics in general relativity.

The Lorentz group has only infinite-dimensional non-trivial unitary representations \cite{Wigner39}. Since symmetries must be represented unitarily in quantum theory, it is necessary to introduce an infinite number of degrees of freedom in order to incorporate global Lorentz symmetry in quantum theory. However, in QFT even local bounded spacetime regions get assigned an infinite number of degrees of freedom due to the usage of fields. This leads to the infamous high energy divergencies in QFT. On the other hand, several well-known results, such as the finiteness of the black hole entropy \cite{Bekenstein73,Carlip14}, suggest that gravity should somehow regulate these divergencies. However, a naive physical energy cutoff in QFT breaks the local Lorentz covariance, which leads to severe problems with the physical plausibility of the regulated theory, given the fundamental role of the Lorentz group. Such considerations lead to the thorny puzzle of how to introduce a physical cutoff to the number of local degrees of freedom in QFT while preserving the local Lorentz covariance intact.

In this paper, we show that local Lorentz covariance arises canonically as the group of transformations between local thermal states in the framework of Local Quantum Physics, given three physically reasonable postulates about the local structure of quantum systems: (i) Local observable algebras are finite-dimensional. (ii) Minimal local observable algebras are isomorphic to $\mb{M}_2(\mb{C})$, the observable algebra of a qubit. (iii) The restriction of the vacuum state onto any minimal local observable algebra is a thermal state, which is not maximally mixed. The result follows from the mathematical fact, which we prove, that the proper orthochronous Lorentz group $SL(2,\mb{C}) / \mb{Z}_2$ is (isomorphic to) the group of transformations between such thermal states on $\mb{M}_2(\mb{C})$. Our result provides a natural way to incorporate local Lorentz covariance into Local Quantum Physics, which is compatible with and, indeed, follows from a physical cutoff to the number of local degrees of freedom. Furthermore, the form of the local Lorentz covariance reveals a new and surprising connection between quantum states and spacetime geometry, which we hope will lead to an improved understanding of quantum gravity. In particular, we demonstrate how local restrictions of the vacuum state can be used to determine the connection between different local inertial reference frames.

\section{Local quantum physics}
Local Quantum Physics (LQP) is a well-established algebraic approach to rigorously define QFT models \cite{Haag96}. The generally covariant formulation \cite{BrunettiFredenhagenVerch03} applies to arbitrary globally hyperbolic spacetimes. The formulation of a quantum theoretical model according to the LQP prescription begins by associating to each local bounded region $\mc{O}\subset\mc{M}$ of spacetime $\mc{M}$ the algebra of quantum observables $\mf{A}(\mc{O})$ localized in that region, typically assumed to be a unital $C^*$-algebra. These local observable algebras are required to satisfy the \emph{isotony} property: if $\mc{O}_1 \subset \mc{O}_2$ is a proper inclusion, then $\mf{A}(\mc{O}_1)$ is a proper unital $C^*$-subalgebra of $\mf{A}(\mc{O}_2)$. We may define the quasi-local observable algebra containing all the local observables as
\begin{displaymath}
	\mf{A}(\mc{M}) := \overline{ \cup_{\mc{O}\subset\mc{M}} \mf{A}(\mc{O}) } \,,
\end{displaymath}
where the overline denotes the $C^*$-norm completion. A global quantum state on the quasi-local observable algebra is then given by a consistent assignment of local states $\omega_\mc{O} : \mf{A}(\mc{O}) \rightarrow \mb{C}$ to each of the local algebras $\mf{A}(\mc{O})$, $\mc{O}\subset\mc{M}$, such that if $\mc{O}_1 \subset \mc{O}_2$, then the restriction of $\omega_{\mc{O}_2}$ onto $\mf{A}(\mc{O}_1)$ agrees with $\omega_{\mc{O}_1}$, i.e., $\left. \omega_{\mc{O}_2} \right|_{\mf{A}(\mc{O}_1)} = \omega_{\mc{O}_1}$.

Another important property the local observable algebras must satisfy is \emph{microlocality}: If $\mc{O}_1$ is spacelike separated from $\mc{O}_2$, then $\mf{A}(\mc{O}_1)$ commutes with $\mf{A}(\mc{O}_2)$. Microlocality guarantees the joint measurability of spacelike separated observables. On the other hand, if $\mf{A}(\mc{O}_1)$ and $\mf{A}(\mc{O}_2)$ do not mutually commute, they are statistically dependent in the sense that all states have correlations over them (i.e., there are no product states) \cite{Buchholz04b}.

\section{Finite-dimensional local algebras}
For physical QFT models the local observable algebras are infinite-dimensional hyperfinite factors of type III$_1$ \cite{Buchholz87}. However, as initially proposed by Haag \cite{Haag96}, we want to consider the case that the local observable algebras are finite-dimensional. 
\begin{post}\label{post:finite}
	Local observable algebras are finite-dimensional $C^*$-algebras.
\end{post}
The motivation for Postulate \ref{post:finite} comes from the following two points: (i) As mentioned in the introduction, QFT suffers from high energy divergencies, which may be cured by limiting the number of local degrees of freedom. (See, e.g., \cite{Pye15}.) In LQP, such a physical regularization is particularly simple to implement by requiring the local observable algebras to be finite-dimensional. We expect QFT to appear as an effective field theory obtained from some finite-dimensional model in the infinite-dimensional limit, where the dimension of the local algebras is taken to approach infinity, and the divergencies may reappear. (ii) The entropy of any subsystem in QFT is generically divergent. However, results on black hole thermodynamics suggest that black holes \cite{Carlip14}, and possibly any subsystem \cite{Bousso99}, should be endowed with a finite amount of entropy proportional to the area of its spatial boundary. If a local observable algebra has dimension $n \in \mb{N}$, then the von Neumann entropy of the local states on that algebra is bounded from above by $\sim \ln(n)$. Accordingly, the finite-dimensionality of the local observable algebras regulates the local entropy. For low energy states, the leading divergence of the entropy in the infinite-dimensional limit should then scale generically as the boundary area, as in QFT \cite{Srednicki93,Casini14}, since the low energy states are insensitive to the microscopic structure of the theory.

Let us remark that, due to microlocality,
\begin{displaymath}
	\mf{A}(\mc{O}_1 \cup \mc{O}_2) \cong \mf{A}(\mc{O}_1) \otimes \mf{A}(\mc{O}_2) \,,
\end{displaymath}
when the two regions $\mc{O}_1, \mc{O}_2 \subset \mc{M}$ are spacelike separated. The quantity $\ln\dim(\mf{A}(\mc{O}))$ is extensive under tensor product and measures the maximum amount of information that can be contained in $\mc{O}$. We expect this information carrying capacity of $\mc{O}$ to be proportional to the 4-volume of the domain of dependence of $\mc{O} \in \mc{M}$, so that the density of degrees of freedom is constant in spacetime. Accordingly, given a causally complete spacetime region $\mc{O}$, we may write
\begin{displaymath}
	V(\mc{O}) = v \ln\dim(\mf{A}(\mc{O}))
\end{displaymath}
for the 4-volume of $\mc{O}$, where $v\in\mb{R}_+$ is a constant with the dimensions of a 4-volume. We suspect that the constant $v$ determining the density of degrees of freedom in spacetime should be proportional to the Planck volume $v_{\mt{Pl}} = (\hbar G / c^3)^2 \approx 10^{-140}m^4$. If we keep $v$ fixed, the infinite-dimensional limit corresponds to the infinite-volume thermodynamical limit. On the other hand, we may take $v \rightarrow 0$ in the infinite-dimensional limit in order to describe physical systems with finite 4-volume and infinite density of degrees of freedom, as in QFT. Notice that this `field theory limit' still describes the physics in any measurable-sized spacetime volume extremely accurately due to the smallness of $v$. If we set $v \propto v_{\mt{Pl}}$, the limit $v \rightarrow 0$ can correspond either to $\hbar \rightarrow 0$, $G \rightarrow 0$, or $c \rightarrow \infty$, which suggests a fundamental inconsistency of local QFT and quantum gravity: In order to arrive at a local relativistic ($c < \infty$) quantum ($\hbar > 0$) model in the field theory limit, we must neglect gravitational interactions ($G \rightarrow 0$). One possible way out of this dilemma is to consider the finite-dimensional local description to give the fundamental definition of the theory, which gives us a further incentive to study finite-dimensional LQP.

\section{Minimal local observable algebras}
According to the isotony property of the local observable algebras, smaller spacetime regions should correspond to algebras of smaller dimensionality. Assuming that the local observable algebras that correspond to causally complete spacetime regions are factors (i.e., have trivial centers), as in QFT \cite{Buchholz87}, we find that the minimal local observable algebras must be isomorphic to $\mb{M}_p(\mb{C})$, the algebra of $p$-by-$p$ complex-valued matrices,  for some $p$ prime: None of these can be included into one another as unital subalgebras, since such inclusions are of the form
\begin{displaymath}
	\mb{M}_p(\mb{C}) \mapsto \mb{M}_p(\mb{C}) \otimes \1_k \subset \mb{M}_{pk}(\mb{C}) \,,
\end{displaymath}	
where $\1_k$ is the identity matrix in $\mb{M}_k(\mb{C})$ for some $k\in\mb{N}$, and obviously $pk\in\mb{N}$ cannot be prime. The smallest non-trivial such algebra is $\mb{M}_2(\mb{C})$, and therefore we postulate:
\begin{post}\label{post:qubit}
	Minimal local observable algebras are isomorphic to $\mb{M}_2(\mb{C})$.
\end{post}
$\mb{M}_2(\mb{C})$ is exactly the observable algebra of a single qubit, which is a system capable of storing a minimal amount of (quantum) information.

\section{Local thermal states}
In Minkowski spacetime, the global vacuum state is characterized by its minimal energy. However, restricted to local subregions, the vacuum gives rise to thermal states \cite{Haag96}. Due to the equivalence principle, according to which any small enough spacetime region is well-approximated by a piece of the Minkowski spacetime, we expect the vacuum to be approximately thermal also in any sufficiently small region of a generic curved spacetime. By extension, we postulate:
\begin{post}\label{post:localeq}
	The vacuum restricted to any minimal local observable algebra is a thermal state, which is not maximally mixed.
\end{post}

A thermal state of a finite-dimensional quantum system is represented by a density matrix of the Gibbs form
\begin{equation}\label{eq:gibbs}
	\rho = e^{-\beta H} / \mc{Z}_{\beta,H} \,,
\end{equation}
where $\mc{Z}_{\beta,H} := \tr(e^{-\beta H})$ is the partition function, $H$ is the Hamiltonian operator, and $\beta \in \mb{R}_+$ is the inverse temperature. The maximally mixed state corresponds to $\rho = \frac{1}{2}\1_2$. In Appendix \ref{app:propo1} we prove:
\begin{propo}\label{propo:gibbs}
Let $\rho\in\mb{M}_2(\mb{C})$ be a non-degenerate density matrix, and $\rho \neq \frac{1}{2}\emph{\1}_2$. Then, at any fixed inverse temperature $\beta\in\mb{R}_+$, there exists a unique Hamiltonian $H$ with spectrum $\{0,\epsilon > 0\}$ satisfying (\ref{eq:gibbs}).
\end{propo}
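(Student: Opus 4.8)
The plan is to reduce the operator equation (\ref{eq:gibbs}) to a pair of scalar conditions on the eigenvalues by diagonalizing $\rho$. Since $\rho$ is a density matrix it is self-adjoint, positive, and of unit trace, so I would write $\rho = U\,\diag(\lambda_+,\lambda_-)\,U^\dagger$ with $U$ unitary and $\lambda_+ + \lambda_- = 1$. I read the hypothesis that $\rho$ is \emph{non-degenerate} as the statement that $\rho$ is invertible, i.e.\ both eigenvalues are strictly positive (which is in any case forced, since a Gibbs state is always full rank), while $\rho\neq\frac{1}{2}\1_2$ forces the eigenvalues to be distinct; together these give $0<\lambda_-<\lambda_+<1$ after fixing the ordering. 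Both conditions will turn out to be exactly what is needed: positivity of $\lambda_-$ keeps the energy gap finite, and distinctness keeps it nonzero.

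Next I would construct the candidate Hamiltonian. A self-adjoint $H$ with spectrum $\{0,\epsilon\}$ is precisely $H=\epsilon P$ for a rank-one orthogonal projection $P$, so the functional calculus gives $e^{-\beta H} = \1_2 + (e^{-\beta\epsilon}-1)P$ and hence $\mc{Z}_{\beta,H} = \tr(e^{-\beta H}) = 1 + e^{-\beta\epsilon}$. Substituting into (\ref{eq:gibbs}) shows that $\rho$ must be diagonal in the eigenbasis of $P$, carrying eigenvalue $1/(1+e^{-\beta\epsilon})$ on $\ker P$ and $e^{-\beta\epsilon}/(1+e^{-\beta\epsilon})$ on $\mathrm{ran}\,P$. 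Matching these against $\lambda_+$ and $\lambda_-$ identifies $P$ with the spectral projection of $\rho$ onto its smaller eigenvalue and leaves the single scalar equation $e^{-\beta\epsilon}=\lambda_-/\lambda_+$, whence
\begin{equation}
  \epsilon = \frac{1}{\beta}\ln\frac{\lambda_+}{\lambda_-}\,.
\end{equation}

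Finally I would check that this $H$ is admissible and unique. Because $0<\lambda_-<\lambda_+$, the argument of the logarithm is finite and strictly greater than one, so $\epsilon\in(0,\infty)$ as required; here the full-rank hypothesis prevents $\epsilon\to\infty$ and the strict inequality $\lambda_+\neq\lambda_-$ prevents $\epsilon=0$. For uniqueness, any admissible $H$ must be of the form $\epsilon P$, and the derivation above shows that both data are forced: $\epsilon$ by the eigenvalue ratio and $P$ by the requirement that it project onto the one-dimensional, hence unique, $\lambda_-$-eigenspace of $\rho$. Equivalently, one may invert (\ref{eq:gibbs}) directly as $H=-\beta^{-1}(\ln\rho + \ln\mc{Z}_{\beta,H}\,\1_2)$ and fix the additive constant by the condition that the larger eigenvalue map to energy $0$. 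The only genuinely delicate point is bookkeeping the two excluded cases correctly --- recognizing that a rank-deficient (pure) $\rho$ would demand an infinite gap and that the maximally mixed state would demand a vanishing one --- so that the hypotheses are seen to be not merely sufficient but sharp.
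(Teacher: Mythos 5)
Your proof is correct, and it reaches the same conclusions as the paper's, but by a slightly different route. The paper works ``backwards'': its Lemma~1 takes the operator logarithm of (\ref{eq:gibbs}), observes that every solution of $\rho = e^{-K}/\tr(e^{-K})$ has the form $K = -\ln\rho - \lambda\1_2$, and fixes the additive constant by imposing $\det K = 0$ together with positivity, which selects $\lambda$ as the smallest eigenvalue of $-\ln\rho$; a second lemma then isolates the residual scaling freedom $\beta\mapsto\lambda\beta$, $H\mapsto\lambda^{-1}H$, so that fixing $\beta$ pins down $H$, with $\rho\neq\frac{1}{2}\1_2$ equivalent to $\epsilon>0$. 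You instead work ``forwards'': you parametrize every admissible Hamiltonian as $H=\epsilon P$ with $P$ a rank-one projection, compute the Gibbs state by functional calculus, and match spectral data of $\rho$, which forces $P$ to be the spectral projection onto the smaller eigenvalue $\lambda_-$ (the other matching is excluded since $\epsilon>0$ gives $e^{-\beta\epsilon}<1$) and yields the explicit gap $\epsilon = \beta^{-1}\ln(\lambda_+/\lambda_-)$. Both arguments correctly read ``non-degenerate'' as full rank and use $\rho\neq\frac{1}{2}\1_2$ only to guarantee $\epsilon\neq 0$. What each buys: your matching argument is more concrete and produces the closed-form expressions for $\epsilon$ and $P$ directly, while the paper's logarithm-plus-characteristic-equation route does not rely on the rank-one parametrization (which is special to $\mb{M}_2(\mb{C})$) and would carry over essentially unchanged to $\mb{M}_n(\mb{C})$ with the normalization that the smallest eigenvalue of $H$ be zero; you in fact note this equivalent inversion $H = -\beta^{-1}(\ln\rho + \ln\mc{Z}_{\beta,H}\,\1_2)$ as an aside at the end.
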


In fact, any non-degenerate density matrix in $\mb{M}_2(\mb{C})$ can be written in the form (\ref{eq:gibbs}). However, the physical property that $\rho$ is thermal at the inverse temperature $\beta$ implies that $H$ measures the physical energy content of the system. For the restricted vacuum state on the minimal local observable algebras $\mf{A}(\mc{O}) \cong \mb{M}_2(\mb{C})$, we therefore interpret the unique operator $H$ provided by Proposition \ref{propo:gibbs} to measure the local energy content in the associated minimal 4-volume $V(\mc{O}) = v\ln(4)$. The restriction $\rho\neq \frac{1}{2}\1_2$ is motivated by the requirement that $\rho$ is a finite temperature thermal state with respect to a non-zero local Hamiltonian, so that local excitations carry non-zero energy. In the following, we restrict to consider non-maximally mixed (NMM) thermal states.

\section{Local Lorentz covariance}
Thermal states are generally not Lorentz invariant, since the Hamiltonian $H$ must measure the energy content of a system with respect to some particular inertial reference frame \cite{Ojima86}. Therefore, the Lorentz group must act non-trivially on the local thermal states on the minimal local observable algebras. Indeed, the proper orthochronous Lorentz group is (isomorphic to) $SL(2,\mb{C}) / \mb{Z}_2$, which acts canonically on the hermitian elements $H\in \mb{M}_2(\mb{C})$ as \cite{Wald84}
\begin{equation}\label{eq:SL2Caction}
	H \mapsto \Lambda H \Lambda^* \,,\ \Lambda \in SL(2,\mb{C}) \,.
\end{equation}
We may take $SL(2,\mb{C}) / \mb{Z}_2$ to act similarly on the local Hamiltonian on any $\mf{A}(\mc{O}) \cong \mb{M}_2(\mb{C})$. To justify the interpretation of the action (\ref{eq:SL2Caction}) of $SL(2,\mb{C})$ on a local Hamiltonian as a Lorentz transformation, we note that the eigenvalues of $H$ transform as the timelike component of a Lorentz 4-vector under this implementation of the Lorentz group, as is appropriate for energy: Parametrize the local Hamiltonian as
\begin{equation}\label{eq:Hparam}
	H = h^0 \1_2 + \sum_{i=1}^3 h^i \sigma_i \,,
\end{equation}
where $h^\mu \in\mb{R}$, $\mu=0,\ldots,3$, and $\sigma_i$, $i=1,2,3$, are the Pauli matrices. The eigenvalues of $H$ in terms of $h^\mu$ are $h^0 \pm \sqrt{\sum_i (h^i)^2}$. Since the eigenvalues of $H$ are $0$ and $\epsilon > 0$, we find $h^0 = \sqrt{\sum_i (h^i)^2}$ and $\epsilon = 2h^0 \equiv \tr(H)$. The parameters $h^\mu$ transform as a 4-vector under the action (\ref{eq:SL2Caction}) of $SL(2,\mb{C})$ \cite{Wald84}, $h^0$ being the timelike component. Thus, the eigenvalues $0$ and $\epsilon$ of $H$ transform as the timelike components of the 4-vectors $0$ and $2h^\mu$, respectively.

The map (\ref{eq:SL2Caction}) acts on thermal density matrices as
\begin{equation}\label{eq:lorentz}
	e^{-\beta H} / \mc{Z}_{\beta,H} \stackrel{\Lambda}{\mapsto} e^{-\beta (\Lambda H \Lambda^*)} / \mc{Z}_{\beta,\Lambda H \Lambda^*} \,.
\end{equation}
By the above transformation properties of $H$, the action (\ref{eq:lorentz}) maps NMM thermal states to each other. Furthermore, we can also show:
\begin{propo}\label{propo:lorentz}
Any two NMM thermal states on $\mb{M}_2(\mb{C})$ can be mapped to each other by the action (\ref{eq:lorentz}) of $SL(2,\mb{C})$.
\end{propo}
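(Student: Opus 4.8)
The plan is to reduce the claim to the transitivity of the defining action of $SL(2,\mb{C})$ on $\mb{C}^2\setminus\{0\}$. Fix the inverse temperature $\beta$ appearing in (\ref{eq:lorentz}). By Proposition \ref{propo:gibbs} each NMM thermal state $\rho_i$ ($i=1,2$) is the Gibbs state of a \emph{unique} Hamiltonian $H_i$ of spectrum $\{0,\epsilon_i\}$, $\epsilon_i>0$; this uniqueness is precisely what makes the action (\ref{eq:lorentz}) well defined on density matrices, since it is the transport $\rho_i \mapsto H_i \mapsto \Lambda H_i\Lambda^* \mapsto \rho_i'$ through this bijection. As (\ref{eq:lorentz}) leaves $\beta$ fixed, it therefore suffices to produce $\Lambda\in SL(2,\mb{C})$ with $\Lambda H_1\Lambda^* = H_2$, for then $e^{-\beta\Lambda H_1\Lambda^*}/\mc{Z}_{\beta,\Lambda H_1\Lambda^*} = e^{-\beta H_2}/\mc{Z}_{\beta,H_2} = \rho_2$.

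First I would write each Hamiltonian as a rescaled rank-one projector: since $H_i$ has spectrum $\{0,\epsilon_i\}$, we have $H_i = \epsilon_i\,\psi_i\psi_i^*$, where $\psi_i\in\mb{C}^2$ is the normalized eigenvector for $\epsilon_i$ and $\psi_i^*$ its conjugate transpose. The congruence then acts as
\be
	\Lambda H_1\Lambda^* = \epsilon_1\,(\Lambda\psi_1)(\Lambda\psi_1)^* \,,
\ee
which is manifestly a positive rank-one matrix; since $\Lambda$ is invertible, $\Lambda\psi_1\neq 0$, so the result again has spectrum $\{0,\epsilon'\}$ with $\epsilon' = \epsilon_1|\Lambda\psi_1|^2 > 0$ (consistent with the $4$-vector transformation of $h^\mu$ noted above). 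Hence $\Lambda H_1\Lambda^* = H_2$ holds exactly when $\Lambda\psi_1 = c\,\psi_2$ for a scalar $c$ with $|c|^2 = \epsilon_2/\epsilon_1$, the phase of $c$ being irrelevant as it cancels in the outer product.

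It then remains to find $\Lambda\in SL(2,\mb{C})$ carrying the nonzero vector $\psi_1$ to the nonzero vector $\sqrt{\epsilon_2/\epsilon_1}\,\psi_2$, i.e. to invoke transitivity of $SL(2,\mb{C})$ on $\mb{C}^2\setminus\{0\}$. I would establish this constructively: for any nonzero $u\in\mb{C}^2$ one can complete $u$ to a basis of unit determinant, giving $A\in SL(2,\mb{C})$ whose first column is $u$; choosing such $A$ for $u=\psi_1$ and $B$ for $w=\sqrt{\epsilon_2/\epsilon_1}\,\psi_2$, the element $\Lambda = BA^{-1}\in SL(2,\mb{C})$ sends $\psi_1$ to $w$, as required.

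The main conceptual point, rather than a technical obstacle, is that $H\mapsto\Lambda H\Lambda^*$ is a congruence, not a conjugation, and so does \emph{not} preserve the spectrum: restricting to spectrum-preserving unitaries $\Lambda\in SU(2)$ would connect only states of equal gap $\epsilon$, hence equal purity. It is exactly the non-compact (boost) part of $SL(2,\mb{C})$, entering through the factor $|\Lambda\psi_1|^2$, that rescales $\epsilon_1$ to $\epsilon_2$; recognizing that a single congruence can match the eigenvector direction and the gap simultaneously is the crux, and the $\mb{C}^2$ formulation above is the cleanest route to it.
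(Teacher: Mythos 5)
Your proof is correct, and it reaches the conclusion by a genuinely different (more self-contained) route than the paper. Both arguments begin the same way: fix a common $\beta$ via Proposition~\ref{propo:gibbs} and reduce the claim to finding $\Lambda\in SL(2,\mb{C})$ with $\Lambda H_1\Lambda^* = H_2$. The paper then passes to the parametrization (\ref{eq:Hparam}), observes that the spectra $\{0,\epsilon_a>0\}$ make the 4-vectors $h_a^\mu$ forward lightlike, and \emph{cites} the standard facts that the proper orthochronous Lorentz group acts transitively on forward lightlike vectors and that the action (\ref{eq:SL2Caction}) realizes every such transformation. You instead exploit the rank-one structure directly, writing $H_i=\epsilon_i\,\psi_i\psi_i^*$, reducing the problem to $\Lambda\psi_1 = c\,\psi_2$ with $|c|^2=\epsilon_2/\epsilon_1$, and proving transitivity of $SL(2,\mb{C})$ on $\mb{C}^2\setminus\{0\}$ by explicit basis completion; this is in effect the spinorial proof of the very orbit fact the paper invokes, so your argument is elementary, constructive (it produces $\Lambda$ explicitly), and needs no external reference. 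What the paper's formulation buys in exchange is the direct tie to the 4-vector language $h^\mu$ used throughout the text (eigenvalues as timelike components, boosts versus rotations), which carries the physical interpretation; your closing remark that only the non-compact part of $SL(2,\mb{C})$ can change the gap $\epsilon$ captures the same point from the $\mb{C}^2$ side. Your aside on well-definedness of the action (\ref{eq:lorentz}) via the uniqueness in Proposition~\ref{propo:gibbs} is a worthwhile observation the paper leaves implicit.
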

\begin{proof}
Let the two states on $\mb{M}_2(\mb{C})$ be given by the density matrices $\rho_a = e^{-\beta_a H_a} / \mc{Z}_{\beta_a,H_a}$, $a=1,2$. By Proposition \ref{propo:gibbs} we may choose $\beta_1 = \beta_2 \equiv \beta$ and $H_a$ with spectra $\{0,\epsilon_a > 0\}$. Then, the 4-vectors $h_a^\mu$ parametrizing $H_a$ according to (\ref{eq:Hparam}) are forward lightlike, i.e., $h_a^0 > 0$ and $(h_a^0)^2 - \sum_i (h_a^i)^2 = 0$. Any two such 4-vectors can be mapped to each other by a proper orthochronous Lorentz transformation. Since (\ref{eq:SL2Caction}) can give rise to any such transformation, the claim is true.
\end{proof}
In summary, (i) any Lorentz transformation (\ref{eq:lorentz}) maps a NMM thermal state on $\mb{M}_2(\mb{C})$ to another NMM thermal state, i.e., the map (\ref{eq:lorentz}) cannot take us outside the set of NMM thermal states, and (ii) any two NMM thermal states on $\mb{M}_2(\mb{C})$ are mapped to each other by a Lorentz transformation (\ref{eq:lorentz}). Also, $\1_2\in SL(2,\mb{C})$ is the only element leaving all the NMM thermal states fixed. In this way, the Lorentz group appears canonically as the group of transformations between non-maximally mixed thermal states on any minimal local observable algebra.

\section{Spacetime geometry from vacuum}
Postulate \ref{post:localeq} and Proposition \ref{propo:lorentz} together imply that the vacuum is homogeneous throughout spacetime, up to local Lorentz transformations. Therefore, we may interpret the change in the local Hamiltonian from one minimal local observable algebra to another as indicating that the two algebras occupy different local inertial reference frames. Conversely, we may infer relations between the local inertial reference frames of the minimal local observable algebras by comparing their local Hamiltonians.

For a simple demonstration of how this recovery of the connection between local inertial reference frames works, let us consider the inclusion of a minimal local observable algebra $\mf{A}(\mc{O}) \cong \mb{M}_2(\mb{C})$ into a larger local observable algebra $\mf{B} \cong \mb{M}_{2n}(\mb{C})$ for some $n\in\mb{N}$. Concretely, we may write this inclusion as $a \mapsto a\otimes \1_n \in \mb{M}_{2n}(\mb{C})$ for all $a\in\mb{M}_2(\mb{C})$, where $\1_n$ is the identity matrix in $\mb{M}_n(\mb{C})$. Denote the image of $\mf{A}(\mc{O})$ under this inclusion as $\mf{A}_1 = \mb{M}_2(\mb{C}) \otimes \1_n \subset \mb{M}_{2n}(\mb{C})$. Let us also consider another inclusion of $\mb{M}_2(\mb{C})$, which is related to the first one by an infinitesimal unitary transformation:
\begin{equation}\label{eq:infuni}
	\mf{A}_2 = e^{i\epsilon X}(\mb{M}_2(\mb{C}) \otimes \1_n)e^{-i\epsilon X} \subset \mb{M}_{2n}(\mb{C})
\end{equation}
for some hermitian $X \in \mb{M}_{2n}(\mb{C})$ and $\epsilon \ll 1$. At least for some $X$, we may think of $\mf{A}_2$ as obtained from $\mf{A}_1$ by an infinitesimal spatiotemporal displacement, because (i) any minimal spatiotemporal subvolume of the larger system $\mf{B}$ corresponds to a subalgebra of $\mf{B}$ isomorphic to $\mb{M}_2(\mb{C})$, and (ii) any two isomorphic subalgebras of $\mf{B} \cong \mb{M}_{2n}(\mb{C})$ are related by a unitary transformation.

Now, let the vacuum restrict on $\mf{B}$ to a state represented by the density matrix $\rho\in\mb{M}_{2n}(\mb{C})$. Then, the restrictions of $\rho$ onto $\mf{A}_1$ and $\mf{A}_2$ are given by the reduced density matrices
\begin{eqnarray}
	\rho_1 &=& (\id_2 \otimes \tr_n)(\rho) \in \mb{M}_2(\mb{C}) \,,\nonumber\\
	\rho_2 &=& (\id_2 \otimes \tr_n)(e^{-i\epsilon X} \rho e^{i\epsilon X}) \in \mb{M}_2(\mb{C}) \,,\nonumber
\end{eqnarray}
respectively, where $(\id_2 \otimes \tr_n)$ denotes the partial trace over the second tensor product factor in the decomposition $\mb{M}_{2n}(\mb{C}) = \mb{M}_2(\mb{C}) \otimes \mb{M}_n(\mb{C})$. Accordingly, we find
\begin{displaymath}
	 \left.\frac{\dd}{\dd \epsilon}\right|_{\epsilon = 0} \rho_2 = - i (\id_2 \otimes \tr_n)([X, \rho]) =: \delta_X\rho_1
\end{displaymath}
for the perturbation of $\rho_1$ under the infinitesimal unitary transformation (\ref{eq:infuni}) of $\mf{A}_1$.

On the other hand, we may consider an infinitesimal Lorentz transformation of $\rho_1$. By Postulate \ref{post:localeq} and Proposition \ref{propo:gibbs}, $\rho_1 = e^{-\beta H_1} / \mc{Z}_{\beta,H_1}$ for some $\beta \in \mb{R}_+$ and $H_1$ such that $\tr(H_1) > 0$, $\det(H_1) = 0$. An infinitesimal Lorentz transformation of the form (\ref{eq:lorentz}) is induced by $\Lambda_\epsilon = e^{\epsilon Z} \in SL(2,\mb{C})$, where $Z \in \mb{M}_2(\mb{C})$ is traceless. Then, from the equation
\begin{equation}\label{eq:unitdisp}
	\left.\frac{\dd}{\dd\epsilon}\right|_{\epsilon=0} \frac{e^{-\beta (\Lambda_\epsilon H_1 \Lambda_\epsilon^*)}}{\mc{Z}_{\beta,\Lambda_\epsilon H_1 \Lambda_\epsilon^*}} = \left.\frac{\dd}{\dd \epsilon}\right|_{\epsilon = 0} \rho_2 \equiv \delta_X\rho_1
\end{equation}
we may solve for $Z$ in order to find out, which infinitesimal Lorentz transformation the unitary displacement (\ref{eq:infuni}) of $\mf{A}_1$ in the direction of $X$ induces.

The derivation of the general solution for $Z$ is deferred to Appendix \ref{app:eq7}. There, we find
\begin{equation}\label{eq:Z}
	Z = \frac{1}{\tr(H_1)^2} \left(\tau \Big(H_1 - \frac{1}{2}\tr(H_1)\1_2 \Big) + [\delta_X H_1, H_1] \right) \,,
\end{equation}
where we use the following notations:
\begin{eqnarray}
	\delta_X H_1 &:=& -\beta^{-1} \rho_1^{-\frac{1}{2}} \mc{L}_{\beta H_1}(\delta_X \rho_1) \rho_1^{-\frac{1}{2}} \,, \nonumber\\
	\tau &:=& \frac{2\tr(H_1 \delta_X H_1) - \tr(H_1)\tr(\delta_X H_1)}{\tr(H_1)} \,, \nonumber\\
	\mc{L}_{\beta H_1} &:=& \frac{\ad_{\beta H_1}/2}{\sinh(\ad_{\beta H_1}/2)} \,. \nonumber
\end{eqnarray}
Here, $\mc{L}_{\beta H_1}$ is a linear operator on $\mb{M}_2(\mb{C})$ defined in terms of $\ad_{\beta H_1}(Y) := \beta [H_1,Y]$ by the Taylor expansion
\begin{displaymath}
	\mc{L}_{\beta H_1}(Y) = \sum_{n=0}^\infty \frac{\beta^n}{n!} \left[ \frac{\dd^n}{\dd x^n}\left( \frac{x/2}{\sinh(x/2)} \right) \right]_{x=0} (\ad_{H_1})^n(Y) \,,
\end{displaymath}
where $(\ad_{H_1})^n$ denotes the $n$-fold composition of $\ad_{H_1}$.

The remarkable consequence of the general solution (\ref{eq:Z}) is that the vacuum state can encode properties of the local spacetime geometry by providing the connection between the local inertial reference frames associated to infinitesimally different minimal spatiotemporal volumes.

\section{Summary \& Discussion}
We have shown that, given our Postulates \ref{post:finite}, \ref{post:qubit} and \ref{post:localeq}, local Lorentz covariance appears in finite-dimensional Local Quantum Physics as transformations between thermal states on the minimal local observable algebras. Moreover, we demonstrated how the vacuum can encode local spacetime geometry by providing the connection between the local inertial reference frames of infinitesimally different minimal spacetime regions. Our results provide a new and surprising relation between spacetime structure and local quantum states, which we hope will lead to an improved understanding of quantum gravity.

The general idea of deriving spacetime symmetries from elementary quantum theory is not new, of course; see, e.g., \cite{Goernitz92}. More recent work \cite{Hoehn14} provides a derivation of the local Lorentz covariance from quantum information theory. The approach of \cite{Hoehn14} is complementary to ours in the sense that it focuses on the communication relations between local observers instead of the local properties of the vacuum. In our view, the main strength of our approach as compared to \cite{Hoehn14} is its direct relation to QFT via finite-dimensional LQP, which should allow for a more immediate application to quantum gravity.

It is also worth noting that our derivation, as that of \cite{Hoehn14}, is specific to the $3+1$-dimensional Lorentz group. This limitation does not worry us too much at the moment, as all experimental data to date is consistent with four spacetime dimensions. In fact, it can also be considered to be a positive feature, since it means that the framework is immediately ruled out, if a larger spacetime symmetry group is found. Falsifiability of a theoretical idea should always be a positive thing in science.

In \cite{Raasakka16} we introduced a spacetime-free framework for quantum theory, which is based on the construction of the kinematical quantum observable algebra ``from the ground up'' as the free product of component algebras corresponding to individual observables. The physical observable algebra was suggested to be obtained via the GNS representation induced by some reference state, such as the vacuum. An important application of this framework will be the construction of finite-dimensional LQP models by applying the free product construction of \cite{Raasakka16} to glue collections of minimal local observable algebras together in different ways, and studying the quantum systems that arise in this way. The results in this paper will allow to directly associate spacetime geometries to the reference states of such systems.

Ultimately, our goal is to develop a theoretical framework capable of accommodating both QFT and gravity. A more thorough investigation of finite-dimensional LQP from this perspective is ongoing. As already mentioned, we expect QFT to emerge out of finite-dimensional LQP in the field theory limit. Therefore, the explicit definition and study of the infinite-dimensional limits of finite-dimensional LQP models will be vital to our future work. Gravity, on the other hand, should perhaps be expected to manifest itself only outside this limit, as we have argued above. Following e.g.\ \cite{Jacobson95,Jacobson15}, we suspect that gravity is a quantum statistical effect. Finite-dimensional LQP seems to provide a suitable and well-behaved framework to further study this idea. We look forward to explore its relation with the connection of local quantum states to spacetime geometry uncovered in this work.

\begin{acknowledgments}
We would like to thank Paolo Bertozzini, Philipp H\"ohn and the anonymous referees for their helpful comments on the manuscript.
\end{acknowledgments}

\appendix
\section{Proof of Proposition 1}\label{app:propo1}
Let us first prove the following lemma.
\begin{lemma}\label{lemma:K}
A non-degenerate density matrix $\rho \in \mb{M}_2(\mb{C})$ can always be written in the form
\be\label{eq:gibbsK}
	\rho = \frac{e^{-K}}{\emph{\tr}(e^{-K})}
\ee
such that the spectrum of $K$ is $\{0,\epsilon \geq 0\}$. Moreover, $K$ satisfying these requirements is uniquely given by
\be\label{eq:K}
	-\ln \rho - \lambda \emph{\1}_2 \,,
\ee
where $\lambda \in \mb{R}_+$ is the smallest eigenvalue of $-\ln \rho > 0$.
\end{lemma}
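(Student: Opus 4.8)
The plan is to prove Lemma \ref{lemma:K} by direct construction, exhibiting the claimed $K$ explicitly and then verifying both the spectral condition and uniqueness. Since $\rho\in\MiiC$ is a non-degenerate density matrix, it is positive-definite (both eigenvalues strictly positive) with unit trace, so its two eigenvalues $p_1,p_2$ satisfy $0<p_i<1$ and $p_1+p_2=1$. Consequently $-\ln\rho$ is a well-defined hermitian operator with eigenvalues $-\ln p_1,-\ln p_2$, both strictly positive since $p_i<1$; this justifies writing $-\ln\rho>0$ as asserted. First I would set $\lambda := \min\{-\ln p_1,-\ln p_2\}\in\mb{R}_+$, the smallest eigenvalue, and define $K := -\ln\rho - \lambda\1_2$ as in (\ref{eq:K}).

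Next I would verify that this $K$ has the required spectrum. Because $K$ differs from $-\ln\rho$ by a multiple of the identity, $K$ is diagonal in the same eigenbasis as $\rho$, with eigenvalues $(-\ln p_i)-\lambda$. By construction one of these equals $0$ (the one attaining the minimum), and the other is $\epsilon := |\ln p_1 - \ln p_2| \geq 0$, with equality precisely when $\rho=\frac{1}{2}\1_2$. Hence the spectrum of $K$ is $\{0,\epsilon\geq 0\}$ as claimed. To confirm (\ref{eq:gibbsK}), I would compute $e^{-K}=e^{\lambda}\,e^{\ln\rho}=e^{\lambda}\rho$, so $\tr(e^{-K})=e^{\lambda}$ (since $\tr\rho=1$), and therefore $e^{-K}/\tr(e^{-K})=\rho$, establishing the Gibbs form.

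The genuinely substantive step is uniqueness: I must show that shifting $K$ by the identity is the \emph{only} freedom, and that the normalization constraint together with the spectral condition pins it down completely. Given any $K'$ satisfying (\ref{eq:gibbsK}), taking logarithms of $\rho = e^{-K'}/\tr(e^{-K'})$ gives $-\ln\rho = K' + \ln\tr(e^{-K'})\,\1_2$, so $K' = -\ln\rho - c\,\1_2$ with $c := \ln\tr(e^{-K'})$ a real constant; thus every admissible $K'$ is an identity-shift of $-\ln\rho$, and $K'$ is automatically hermitian and commutes with $\rho$. The eigenvalues of such a $K'$ are $(-\ln p_i)-c$, and the requirement that the spectrum be $\{0,\epsilon\geq 0\}$ with $0$ the smaller value forces $c$ to equal the smallest eigenvalue $\lambda$ of $-\ln\rho$, since that is the unique shift making the minimum eigenvalue vanish while keeping the other non-negative. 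This determines $c=\lambda$ uniquely, so $K'=K$, completing the argument.

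I expect the main obstacle to be purely a matter of careful bookkeeping rather than deep difficulty: one must be attentive to the ordering of the eigenvalues (ensuring $0$ is the smaller spectral value, not $\epsilon$) and to confirming that no other value of the shift $c$ can satisfy the spectral constraint. The degenerate boundary case $\rho=\frac{1}{2}\1_2$, where $\epsilon=0$ and $K=0$, should be checked to confirm the statement remains consistent, even though Proposition \ref{propo:gibbs} will subsequently exclude it by requiring $\epsilon>0$.
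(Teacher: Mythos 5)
Your proof is correct and follows essentially the same route as the paper: take logarithms of the Gibbs form to see that any admissible $K$ must be $-\ln\rho - c\,\1_2$, then use the spectral requirement (zero as the smallest eigenvalue, the other non-negative) to fix $c$ as the smallest eigenvalue of $-\ln\rho$. The paper phrases this last step via $\det(K)=0$ and the characteristic equation of $-\ln\rho$, but the content is identical to your eigenvalue bookkeeping.
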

\begin{proof}
Notice that $K = -\ln\rho - \lambda \1_2$ satisfies (\ref{eq:gibbsK}) for any $\lambda\in\mb{R}$. Moreover, we can easily verify that all matrices $K$ satisfying (\ref{eq:gibbsK}) for a fixed $\rho$ are of this form by taking logarithms on both sides of (\ref{eq:gibbsK}). Then, $\det(K) = 0$ gives
\bd
	\det(-\ln\rho - \lambda\1_2) = 0 \,,
\ed
the characteristic equation of $-\ln\rho$, which has two solutions: the eigenvalues $\lambda_i > 0$ of $-\ln\rho$. (Notice that $-\ln\rho > 0$, as $0 < \rho < 1$.) Let $\lambda_1 \leq \lambda_2$. Then, $-\ln\rho - \lambda_1\1_2 \geq 0$ and $-\ln\rho - \lambda_2\1_2 \leq 0$. Accodingly, (\ref{eq:K}) is the unique solution for $K$ satisfying (\ref{eq:gibbsK}) with spectrum $\{0, \lambda_2 - \lambda_1 =: \epsilon \geq 0\}$.
\end{proof}
The following is a simple consequence of Lemma \ref{lemma:K}:
\begin{lemma}\label{lemma:gibbs}
Let $\rho\in\mb{M}_2(\mb{C})$ be a non-degenerate density matrix. Then, the inverse temperature $\beta\in\mb{R}_+$ and the Hamiltonian $H$ with spectrum $\{0,\epsilon \geq 0\}$, satisfying
\be\label{eq:gibbs2}
	\rho = \frac{e^{-\beta H}}{\emph{\tr}(e^{-\beta H})}
\ee
are unique up to the simultaneous scaling
\bd
	\beta \mapsto \lambda\beta \,,\quad H \mapsto \lambda^{-1} H \,,\quad \lambda \in \mb{R}_+ \,.
\ed
\end{lemma}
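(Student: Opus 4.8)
The plan is to leverage Lemma~\ref{lemma:K} directly, since it already produces a canonical decomposition $\rho = e^{-K}/\tr(e^{-K})$ with the spectrum of $K$ fixed to be $\{0,\epsilon\geq 0\}$, and establishes that \emph{any} $K$ satisfying~(\ref{eq:gibbsK}) differs from $-\ln\rho$ only by a multiple of $\1_2$. First I would observe that the requirement $\rho = e^{-\beta H}/\tr(e^{-\beta H})$ with $\beta\in\mb{R}_+$ and $H$ of spectrum $\{0,\epsilon\geq 0\}$ is equivalent, upon setting $K=\beta H$, to~(\ref{eq:gibbsK}) for $K$ of spectrum $\{0,\beta\epsilon\geq 0\}$. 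Since $\beta>0$, the operator $K=\beta H$ inherits the spectral form $\{0,\geq 0\}$, so every admissible pair $(\beta,H)$ gives rise to a $K$ of the type classified by Lemma~\ref{lemma:K}.

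The key step is then the uniqueness direction. By Lemma~\ref{lemma:K} there is a \emph{unique} matrix $K_0$ with spectrum $\{0,\epsilon_0\geq 0\}$ satisfying~(\ref{eq:gibbsK}), namely $K_0 = -\ln\rho - \lambda\1_2$. Hence for any valid pair $(\beta,H)$ we must have $\beta H = K_0$, which forces $H = \beta^{-1}K_0$. This immediately shows that $H$ is determined by $\beta$ (and conversely), so the only remaining freedom is in the choice of the positive scalar $\beta$. Writing the second pair as $(\beta',H')$ with $\beta'H' = K_0 = \beta H$, I would set $\lambda := \beta'/\beta \in\mb{R}_+$ and read off $\beta' = \lambda\beta$ and $H' = \lambda^{-1}H$, which is precisely the claimed scaling relation.

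The only subtlety worth checking is that the spectral constraint on $H$ is preserved under this scaling: if $H$ has spectrum $\{0,\epsilon\geq 0\}$, then $\lambda^{-1}H$ has spectrum $\{0,\lambda^{-1}\epsilon\geq 0\}$, which is still of the required form since $\lambda^{-1}>0$, so the rescaled pair is again admissible. I do not anticipate a genuine obstacle here, as all the analytic content has been absorbed into Lemma~\ref{lemma:K}; the proposition is essentially a repackaging of its uniqueness clause together with the trivial bookkeeping of how the single scalar degree of freedom $\beta$ distributes between $\beta$ and $H$. The mild care required is simply to present the equivalence $K=\beta H$ cleanly and to confirm closure of the admissible set under scaling.
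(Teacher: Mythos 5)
Your proposal is correct and follows essentially the same route as the paper: both arguments reduce the claim to the uniqueness of the product $\beta H$ via Lemma~\ref{lemma:K} (setting $K=\beta H$), after which the residual freedom is exactly the simultaneous scaling $\beta\mapsto\lambda\beta$, $H\mapsto\lambda^{-1}H$. Your additional check that the spectral condition is preserved under this scaling is a harmless elaboration of what the paper leaves implicit.
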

\begin{proof}
By comparing (\ref{eq:gibbs2}) to (\ref{eq:gibbsK}), Lemma \ref{lemma:K} shows that the product $\beta H$ in (\ref{eq:gibbs2}) is unique. Since the product $\beta H$ is invariant under the scaling $\beta \mapsto \lambda \beta$, $H \mapsto \lambda^{-1}H$ for any $\lambda\in\mb{R}_+$, we are still left with this freedom in the choice of $\beta$ and $H$.
\end{proof}
Thus, fixing $\beta\in\mb{R}_+$ leads to a unique Hamiltonian by Lemma \ref{lemma:gibbs}. With the restriction $\rho \neq \frac{1}{2}\1_2$ corresponding to $H\neq 0 \Leftrightarrow \epsilon > 0$ this proves Proposition 1.

\section{Derivation of Equation 7}\label{app:eq7}
Let us first prove the following helpful lemma.
\begin{lemma}\label{lem:Hpert}
Let $\rho \in \mb{M}_2(\mb{C})$ be a non-degenerate density matrix. Then, a linear perturbation
\bd
	\rho \mapsto \rho + \epsilon \delta\rho + \mc{O}(\epsilon^2)
\ed
for $\epsilon \ll 1$  and traceless $\delta\rho\in\mb{M}_2(\mb{C})$ corresponds to a linear perturbation of the thermal Hamiltonian given by
\bd
	H \mapsto H + \epsilon \delta H + \mc{O}(\epsilon^2) \,,
\ed
where
\bd
	\delta H = -\beta^{-1} \rho^{-\frac{1}{2}} \frac{\emph{\ad}_{\beta H}/2}{\sinh(\emph{\ad}_{\beta H}/2)}(\delta\rho) \rho^{-\frac{1}{2}} \,.
\ed
\end{lemma}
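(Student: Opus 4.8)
The plan is to linearize the Gibbs relation $\rho = e^{-\beta H}/\mc{Z}$, with $\mc{Z}:=\tr(e^{-\beta H})$, and then invert the resulting linear map from $\delta H$ to $\delta\rho$. First I would apply the standard first-order (Duhamel) variation formula for the matrix exponential,
\bd
	\delta(e^{-\beta H}) = -\beta\int_0^1 e^{-s\beta H}\,\delta H\,e^{-(1-s)\beta H}\,\dd s \,.
\ed
The decisive simplification is to note that, since $\rho$ is non-degenerate, $\ln\rho$ exists and $-\beta H = \ln\rho + (\ln\mc{Z})\1_2$, so that $e^{-s\beta H} = \mc{Z}^s\rho^s$. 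This turns the integrand into $\mc{Z}\,\rho^s\,\delta H\,\rho^{1-s}$ and expresses everything through $\rho$ alone.

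Next I would symmetrize the integral via $s = u+\tfrac12$, using $\rho^u\,\delta H\,\rho^{-u} = e^{u\,\ad_{\ln\rho}}(\delta H) = e^{-u\,\ad_{\beta H}}(\delta H)$, where the $\1_2$-part of $\ln\rho$ drops out of the adjoint action so that $\ad_{\ln\rho} = -\ad_{\beta H}$. Carrying out $\int_{-1/2}^{1/2}e^{-u\,\ad_{\beta H}}\,\dd u = \frac{2\sinh(\ad_{\beta H}/2)}{\ad_{\beta H}}$ produces exactly the inverse $\mc{L}_{\beta H}^{-1}$ of the operator appearing in the claim, giving $\int_0^1\rho^s\,\delta H\,\rho^{1-s}\,\dd s = \rho^{1/2}\,\mc{L}_{\beta H}^{-1}(\delta H)\,\rho^{1/2}$. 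Differentiating the normalization yields $\delta\mc{Z}/\mc{Z} = -\beta\,\tr(\rho\,\delta H)$ by cyclicity of the trace, and assembling the pieces from $\delta(e^{-\beta H}) = \delta\mc{Z}\,\rho + \mc{Z}\,\delta\rho$ gives the relation
\bd
	\delta\rho = -\beta\,\rho^{1/2}\,\mc{L}_{\beta H}^{-1}(\delta H)\,\rho^{1/2} + \beta\,\tr(\rho\,\delta H)\,\rho \,,
\ed
which I then must solve for $\delta H$.

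The inversion is the step I expect to be the main obstacle, and it rests on two observations. First, $\mc{L}_{\beta H}$ is a function of $\ad_{\beta H} = M_L - M_R$, where $M_L, M_R$ denote left and right multiplication by $\beta H$; since conjugation by $\rho^{\mp1/2}$ equals a scalar multiple of $\exp(\pm(M_L+M_R)/2)$ and $M_L, M_R$ commute, any function of $\ad_{\beta H}$ commutes with conjugation by $\rho^{\pm1/2}$. This lets me move $\mc{L}_{\beta H}$ through the $\rho^{\pm1/2}$ factors freely. Second, the inhomogeneous trace term vanishes: substituting the candidate $\delta H = -\beta^{-1}\rho^{-1/2}\mc{L}_{\beta H}(\delta\rho)\rho^{-1/2}$ and using cyclicity gives $\tr(\rho\,\delta H) = -\beta^{-1}\tr(\mc{L}_{\beta H}(\delta\rho))$, which is zero because $\mc{L}_{\beta H} = \id + \sum_{n\geq1}c_n\,\ad_{\beta H}^n$, while $\delta\rho$ is traceless and every $\ad_{\beta H}^n(\delta\rho)$ with $n\geq1$ is a commutator and hence traceless. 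With the trace term eliminated, commuting $\mc{L}_{\beta H}^{-1}$ through and using $\mc{L}_{\beta H}^{-1}\mc{L}_{\beta H} = \id$ verifies that this $\delta H$ reproduces $\delta\rho$, establishing the formula. I would close by remarking that $\delta H$ is determined only up to an additive multiple of $\1_2$—the gauge freedom $H\mapsto H + c\,\1_2$ leaving $\rho$ invariant—and that the stated expression is precisely the representative singled out by $\tr(\rho\,\delta H) = 0$.
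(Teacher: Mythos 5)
Your argument is correct, and it reaches the paper's formula by a somewhat different route. The paper absorbs the normalization at the outset by writing the perturbed state as $e^{-\beta H}e^{\epsilon\rho^{-1}\delta\rho}/\tr(e^{-\beta H}e^{\epsilon\rho^{-1}\delta\rho})$ (valid to first order because $\tr(\delta\rho)=0$), defines $\delta H$ through $e^{-\beta H}e^{\epsilon\rho^{-1}\delta\rho}\equiv e^{-\beta(H+\epsilon\delta H)}$, and invokes the left logarithmic derivative of the exponential map, $e^{-\zeta}\tfrac{\dd}{\dd\epsilon}e^{\zeta}=\tfrac{1-e^{-\ad_\zeta}}{\ad_\zeta}\big(\tfrac{\dd\zeta}{\dd\epsilon}\big)$, obtaining $\rho^{-1}\delta\rho=\tfrac{e^{\ad_{\beta H}}-1}{\ad_{\beta H}}(-\beta\,\delta H)$, which it then inverts and conjugates into the $\sinh$ form. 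You instead differentiate the normalized Gibbs state directly with Duhamel's formula, symmetrize the integral so that $\tfrac{2\sinh(\ad_{\beta H}/2)}{\ad_{\beta H}}=\mc{L}_{\beta H}^{-1}$ appears immediately, keep the partition-function variation $\delta\mc{Z}/\mc{Z}=-\beta\,\tr(\rho\,\delta H)$ explicitly, and then verify the candidate $\delta H$ by moving $\mc{L}_{\beta H}$ through the two-sided multiplication by $\rho^{\pm 1/2}$ (legitimate, since $M_L$ and $M_R$ commute and $\ad_{\beta H}=M_L-M_R$) and checking that the trace term vanishes on the candidate because $\mc{L}_{\beta H}-\id$ produces only commutators. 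The two routes rest on equivalent derivative-of-exponential identities, but yours buys a transparent treatment of the normalization and makes the residual freedom $\delta H\mapsto\delta H+c\,\1_2$, fixed here by $\tr(\rho\,\delta H)=0$, explicit --- which dovetails with how the paper uses exactly that freedom right after the lemma to impose $\det(H+\epsilon\delta H)=0$; the paper's product-of-exponentials ansatz, by contrast, disposes of the normalization before any calculus is done, at the price of passing through the asymmetric $\tfrac{1-e^{-\ad}}{\ad}$ operator and a final conjugation step.
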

\begin{proof}
Remember that $\rho = e^{-\beta H}/\tr(e^{-\beta H})$, where $H$ is the thermal Hamiltonian. Up to the first order in $\epsilon$,
\bd
	\rho + \epsilon \delta\rho = \frac{e^{-\beta H} e^{\epsilon \rho^{-1} (\delta\rho)}}{\tr(e^{-\beta H}e^{\epsilon \rho^{-1} (\delta\rho)})} \,,
\ed
as can be verified by expanding the exponentials in $\epsilon$, and noting that $\tr(\delta\rho) = 0$. Let us define $\delta H\in\mb{M}_2(\mb{C})$ via the requirement $e^{-\beta H} e^{\epsilon \rho^{-1} (\delta\rho)} \equiv e^{-\beta(H + \epsilon \delta H)}$. By \cite[Theorem 3.5]{Hall03}, any smooth function $\zeta: \mb{R} \rightarrow \mb{M}_2(\mb{C})$ satisfies for sufficiently small $\epsilon$
\bea
	e^{-\zeta(\epsilon)} \frac{\dd}{\dd \epsilon} e^{\zeta(\epsilon)} &=& \sum_{n=0}^\infty \frac{(-1)^n (\ad_{\zeta(\epsilon)})^n}{(n+1)!} \left( \frac{\dd \zeta}{\dd \epsilon}(\epsilon) \right) \nonumber\\
	&\equiv & \left[ \frac{1 - e^{-\ad_{\zeta(\epsilon)}}}{\ad_{\zeta(\epsilon)}} \right] \left( \frac{\dd \zeta}{\dd \epsilon}(\epsilon) \right) \,, \label{eq:BCH}
\eea
where $\ad_{X}(Y) := [X,Y]$ for any $X,Y \in \mb{M}_2(\mb{C})$, we use the notation
\bd
	(\ad_X)^n = \underbrace{\ad_X \circ \cdots \circ \ad_X}_{n \mt{ pcs}} \,,
\ed
and the formal expression on the last line is defined in terms of the expansion on the previous line. Choosing $\zeta(\epsilon) = -\beta(H + \epsilon \delta H)$ gives $\zeta(0) = -\beta H$ and $\frac{\dd \zeta}{\dd \epsilon}(\epsilon) = -\beta\delta H$. On the other hand, we find
\bd
	\left(e^{-\beta H} e^{\epsilon \rho^{-1} (\delta\rho)}\right)^{-1} \frac{\dd}{\dd \epsilon} e^{-\beta H} e^{\epsilon \rho^{-1} (\delta\rho)} =  \rho^{-1}(\delta\rho)
\ed
at $\epsilon = 0$. Thus, we get from (\ref{eq:BCH}) the relation
\bd
	\rho^{-1}(\delta\rho) = \left[ \frac{e^{\ad_{\beta H}} - 1}{\ad_{\beta H}} \right] \left( -\beta\delta H \right) \,,
\ed
and finally acting on both sides by the inverse operator $\ad_{\beta H}/(e^{\ad_{\beta H}} - 1)$, and using $\rho^{-\frac{1}{2}}(\delta\rho)\rho^{\frac{1}{2}} = e^{-\ad_{\beta H}/2}(\delta\rho)$, we find the stated result.
\end{proof}
Any state is invariant under the addition of terms proportional to the identity to the thermal Hamiltonian, and therefore also
\bd
	H \mapsto H + \epsilon (\delta H + c\1_2) + \mc{O}(\epsilon^2) \,,\quad c\in\mb{R}\,,
\ed
induces the same linear perturbation to the state as in Lemma \ref{lem:Hpert}. We may then choose $c$ so that also the determinant of the perturbed Hamiltonian vanishes (again up to first order in $\epsilon$):
\bd
	\det(H + \epsilon (\delta H + c\1_2)) = 0 \,.
\ed
Using the identity $\det(X) = \frac{1}{2}(\tr(X)^2 - \tr(X^2))$ valid for 2-by-2 matrices, and the property $H^2 = \tr(H)H$, we may easily solve for $c$, and get
\bd
	c = \frac{\tr(H\delta H)}{\tr(H)} - \tr(\delta H) \,.
\ed

The remaining task is then to find an infinitesimal Lorentz transformation generated by some $Z\in \mb{M}_2(\mb{C})$, $\tr(Z)=0$, such that $e^{\epsilon Z} H e^{\epsilon Z^*} = H + \epsilon(\delta H + c\1_2) + \mc{O}(\epsilon^2)$, which yields the requirement
\bd
	ZH + HZ^* = \delta H + c\1_2 \,.
\ed
If we split $Z$ into its hermitian and anti-hermitian parts as $Z = Z_+ + Z_-$, where $Z_\pm^* = \pm Z_\pm$, we get
\bd
	[Z_+,H]_+ + [Z_-,H]_- = \delta H + c\1_2 \,,
\ed
where $[X,Y]_\pm = XY \pm YX$. The hermitian part of $Z$ generates a pure boost while the anti-hermitian part generates a spatial rotation. Since they commute at the linear level, we may solve for both separately. The general idea of the derivation is that we first use $Z_+$ to boost $H$ so that its trace agrees with the trace of $\delta H + c\1_2$, and we can then rotate to $\delta H + c\1_2$ using $Z_-$.

Let us first focus on $Z_+$. We start with the ansatz $Z_+ = r(H-\frac{1}{2}\tr(H)\1_2)$ for some $r\in\mb{R}$, which generates a boost to the direction of $H$. As said, we require
\bd
	\tr([Z_+,H]_+) = 2r\tr(H^2-\frac{1}{2}\tr(H)H) = \tr(\delta H + c\1_2) \,,
\ed
from which we can easily solve
\bd
	r = \frac{\tr(\delta H) + 2c}{\tr(H)^2} \,.
\ed

The next task is then to find $Z_-$ such that
\bea
	[Z_-,H]_- & = & \delta H + c\1_2 - [Z_+,H]_+ \nonumber\\
	& = & (\delta H - \frac{1}{2}\tr(\delta H) \1_2) + \lambda(H - \frac{1}{2}\tr(H)\1_2) \,, \nonumber
\eea
where the last line follows by substitution, and denoting
\bd
 \lambda := \frac{2\tr(H\delta H) - \tr(H)\tr(\delta H)}{\tr(H)^2} \,.
\ed
To that end, let us introduce another elementary lemma:
\begin{lemma}\label{lem:tripleprod}
Let $X_i\in\mb{M}_2(\mb{C})$, $i=1,2,3$. Then,
\bd
	\frac{1}{2}[[X_1,X_2],X_3] = \emph{\tr}(X_2^o X_3^o) X_1^o - \emph{\tr}(X_3^o X_1^o) X_2^o \,,
\ed
where $X_i^o := X_i - \frac{1}{2}\emph{\tr}(X_i)\emph{\1}_2$ is the trace-free part of $X_i$.
\end{lemma}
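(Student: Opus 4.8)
The plan is to exploit that the identity matrix is central, so that both sides depend only on the trace-free parts $X_i^o$. Since $[X_1,X_2]=[X_1^o,X_2^o]$ and hence $[[X_1,X_2],X_3]=[[X_1^o,X_2^o],X_3^o]$, while the right-hand side is already expressed entirely through the $X_i^o$, it suffices to prove the statement for traceless matrices. Writing $A:=X_1^o$, $B:=X_2^o$, $C:=X_3^o$, the claim reduces to $\frac{1}{2}[[A,B],C]=\tr(BC)A-\tr(CA)B$ for traceless $A,B,C\in\mb{M}_2(\mb{C})$.

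The key engine is an anticommutator identity special to $2\times 2$ matrices. For any traceless $M\in\mb{M}_2(\mb{C})$, the Cayley--Hamilton theorem gives $M^2-\tr(M)M+\det(M)\1_2=0$, which together with $\det(M)=-\frac{1}{2}\tr(M^2)$ (valid since $\tr(M)=0$) yields $M^2=\frac{1}{2}\tr(M^2)\1_2$. Polarizing this relation, i.e.\ applying it to $M=X+Y$ and subtracting the contributions of $X^2$ and $Y^2$, I obtain $XY+YX=\tr(XY)\1_2$ for all traceless $X,Y$. Establishing this anticommutator formula is really the crux of the argument; everything afterwards is elementary manipulation.

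With the anticommutator identity in hand, the computation is short. From $\{A,B\}=\tr(AB)\1_2$ I write $[A,B]=2AB-\tr(AB)\1_2$, and since the term proportional to $\1_2$ is central, $[[A,B],C]=2[AB,C]=2(ABC-CAB)$. Using $CA=\tr(CA)\1_2-AC$ (the same anticommutator identity for $A,C$), I replace $CAB=\tr(CA)B-ACB$, so that $ABC-CAB=A(BC+CB)-\tr(CA)B=A\{B,C\}-\tr(CA)B$. Finally $A\{B,C\}=\tr(BC)A$ gives $ABC-CAB=\tr(BC)A-\tr(CA)B$, and dividing the expression for $[[A,B],C]$ by two completes the proof.

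The main obstacle is simply recognizing and proving the anticommutator identity, which is where the restriction to $\mb{M}_2(\mb{C})$ is essential---it fails in higher dimensions. As an alternative to the basis-free route above, one could expand $A,B,C$ in the Pauli basis and reduce the whole statement to the contraction identity $\epsilon_{ijk}\epsilon_{klm}=\delta_{il}\delta_{jm}-\delta_{im}\delta_{jl}$ together with $\tr(\sigma_i\sigma_j)=2\delta_{ij}$; this works but is more computational, so I would prefer the anticommutator approach.
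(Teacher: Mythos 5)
Your proof is correct, and it takes a genuinely different route from the paper. The paper's proof expands the trace-free parts in the Pauli basis, $X_i^o = \vec{x}_i\cdot\vec{\sigma}$, and reduces the claim to the identities $[X_i,X_j] = 2i(\vec{x}_i\wedge\vec{x}_j)\cdot\vec{\sigma}$, $\tr(X_i^o X_j^o)=2(\vec{x}_i\cdot\vec{x}_j)$, and the vector triple product rule in $\mb{R}^3$ --- precisely the ``more computational'' alternative you mention and set aside. Your basis-free argument instead isolates the $2\times 2$-specific input as the anticommutator identity $XY+YX=\tr(XY)\1_2$ for traceless $X,Y$, obtained by polarizing the Cayley--Hamilton relation $M^2=\tfrac{1}{2}\tr(M^2)\1_2$, after which the double commutator collapses by a two-line manipulation; all steps (the reduction to trace-free parts, the polarization, and the rearrangement $ABC-CAB=A\{B,C\}-\tr(CA)B$) check out. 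What each approach buys: the paper's version is immediate for anyone fluent with $\mf{su}(2)$/Pauli algebra and makes the connection to the cross product (hence to rotations) transparent, which fits the physical context of the appendix; yours avoids choosing a basis, makes explicit exactly where dimension two enters (your anticommutator identity is the coordinate-free face of the $\epsilon$-contraction identity underlying the triple product), and would generalize more gracefully to statements phrased purely in terms of traces.
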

\begin{proof}
Denote $X_i^o = \vec{x}_i \cdot \vec{\sigma}$, where $\vec{x}_i \in \mb{R}^3$, $\vec{\sigma} = (\sigma_1, \sigma_2, \sigma_3)$, and $\sigma_k$ are the Pauli matrices. Then, the statement follows directly from the identities $[X_i,X_j] = 2i(\vec{x}_i \wedge \vec{x}_j) \cdot \vec{\sigma}$, $\tr(X_i^o X_j^o) = 2(\vec{x}_i \cdot \vec{x}_j)$, and the well-known triple product identity for vectors in $\mb{R}^3$
\bd
	(\vec{x}_1 \wedge \vec{x}_2) \wedge \vec{x}_3 = (\vec{x}_1 \cdot \vec{x}_3) \vec{x}_2 - (\vec{x}_2 \cdot \vec{x}_3) \vec{x}_1 \,. \qedhere
\ed
\end{proof}
Using Lemma \ref{lem:tripleprod} it is straightforward to verify that choosing $Z_- = \frac{1}{\tr(H)^2}[\delta H, H]$ yields exactly the desired result. Thus, we find for $Z$ the expression
\bea
	Z & = & Z_+ + Z_- = r(H-\frac{1}{2}\tr(H)\1_2) + \frac{1}{\tr(H)^2}[\delta H, H] \nonumber\\
	& = & \frac{1}{\tr(H)^2}\left(\tau(H - \frac{1}{2}\tr(H)\1_2) + [\delta H, H]\right) \,,
\eea
where $\tau := (2\tr(H\delta H) - \tr(H)\tr(\delta H))/\tr(H)$, as stated in the main text.


%

\end{document}